\documentclass[conference]{IEEEtran}
\IEEEoverridecommandlockouts
\usepackage{cite,amsmath,amssymb,amsfonts,amsthm,graphicx,textcomp,xcolor,booktabs,bm,mathtools,url,caption}
\usepackage[hidelinks]{hyperref}
\usepackage{orcidlink}
\usepackage{tikz}

\newtheorem{theorem}{Theorem}
\newtheorem{proposition}[theorem]{Proposition}

\newtheorem{assumption}{Assumption}
\newtheorem{remark}{Remark}

\DeclareMathOperator*{\argmin}{arg\,min}

\begin{document}
\title{Composite Adaptive Higher-Order Control Barrier Functions for Joint Frequency--RoCoF Safety in Low-Inertia SIDS Microgrids}

\author{\IEEEauthorblockN{Shankar Ramharack\,\orcidlink{0000-0003-3759-7333}}
\IEEEauthorblockA{\textit{Engineering Department} \\
\textit{M\&E Partners (Barbados) Limited} \\
St.~Philip, Barbados, W.I.\\
sramharack@ieee.org}
\and
\IEEEauthorblockN{Jonathan Nancoo\,}
\IEEEauthorblockA{\textit{Department of Electrical and Computer Engineering} \\
\textit{The University of the West Indies}\\
St.~Augustine, Trinidad and Tobago\\
jonathan.nancoo@uwi.edu}
\and
\IEEEauthorblockN{Rajiv Sahadeo}
\IEEEauthorblockA{Trinidad and Tobago, W.I.\\
rajivsahadeo@outlook.com}
}

\maketitle

\let\thefootnote\relax\footnotetext{Code and data: \url{https://github.com/sramharack/caribcon26-joint-safety-vsg}. Correspondence: sramharack@ieee.org.}

\begin{abstract}
Small Island Developing States (SIDS) face simultaneous frequency nadir and rate-of-change-of-frequency (RoCoF) violations under high renewable penetration. This paper proposes a \emph{Composite Adaptive Higher-Order Control Barrier Function} (CA-HOCBF) for BESS-coupled virtual synchronous generators that jointly enforces IEEE~1547 frequency ($\pm0.8$\,Hz) and ENTSO-E RoCoF ($\pm1$\,Hz/s) limits as hard safety constraints under parametric uncertainty. The method unifies three tools: (i)~a dual barrier architecture with a zeroing CBF for frequency and an algebraic RoCoF constraint with a robust inertia-floor fallback; (ii)~a composite energy function coupling logarithmic safety barriers with quadratic parameter and disturbance error terms, which drives safety-weighted adaptation; and (iii)~a disturbance observer integrated into the barrier dynamics. We prove that the closed-form QP is always feasible under a BESS capacity condition and, under explicitly stated assumptions, that the composite energy remains finite along trajectories, rendering the joint safe set forward invariant \emph{without requiring parameter convergence}. Simulation on a 10\,MW Caribbean microgrid ($H=2$\,s, 70\% renewables) shows the CA-HOCBF eliminates all frequency and RoCoF violations under a compound disturbance, achieving steady-state error of 0.001\,Hz and RoCoF of 0.50\,Hz/s, where fixed-gain VSGs suffer a 0.62\,Hz offset with 2.25\,Hz/s RoCoF violations.
\end{abstract}
 
\begin{IEEEkeywords}
Control barrier functions, adaptive control, composite energy, virtual synchronous generators, frequency stability, SIDS, BESS, RoCoF.
\end{IEEEkeywords}

\section{Introduction}

\subsection{Motivation}

SIDS are pursuing aggressive renewable targets ($>70\%$) to reduce fossil fuel dependence and meet climate commitments~\cite{irena2024sids}. The displacement of synchronous generators by inverter-based resources (IBRs) critically reduces rotational inertia, making frequency stability the binding operational constraint in these isolated grids~\cite{he2024frequency, milano2018foundations}. Under a single contingency, SIDS microgrids can experience RoCoF exceeding 2\,Hz/s and frequency nadirs violating IEEE~1547-2018~\cite{ieee1547} disconnection thresholds, triggering cascading under-frequency load shedding~\cite{hatziargyriou2021, kundur2004}.

\subsection{Related Work}
\subsubsection{Virtual synchronous generators} 
BESS-coupled VSGs emulate synchronous machine dynamics to provide synthetic inertia and primary frequency response~\cite{bevrani2014, lasseter2020gfm, khan2024gfm}. Zhong and Weiss~\cite{zhong2011synchronverter} proposed the synchronverter, and D'Arco et al.~\cite{darco2015vsg} established virtual inertia control architectures. However, fixed-parameter designs cannot track the dramatic inertia variations in SIDS as IBR dispatch changes hourly~\cite{markovic2021, pattabiraman2018}. Adaptive VSGs have been explored by Shi et al.~\cite{shi2018adaptive} and Fang et al.~\cite{fang2021inertia}, but without formal safety guarantees.

\subsubsection{Standards landscape} International standards now mandate simultaneous frequency and RoCoF compliance: IEEE~1547-2018 Category~III~\cite{ieee1547} prescribes trip thresholds; ENTSO-E RfG~\cite{entsoe2016} requires RoCoF ride-through; IEEE~2800-2022~\cite{ieee2800} specifies grid-forming IBR requirements; IEC~62933-2~\cite{iec62933} governs BESS envelopes; and IEEE~C37.106~\cite{ieeec37} addresses abnormal frequency protection coordination.

\subsubsection{Control barrier functions} Ames et al.~\cite{ames2017cbf, ames2019cbf} established CBFs as a framework for enforcing safety via forward invariance. In power systems, Vu et al.~\cite{vu2021barrier} applied CBFs to frequency-constrained dispatch and Nandanoori et al.~\cite{nandanoori2024dac} developed distributed CBF-based frequency controllers. Robust CBFs~\cite{jankovic2018robust, so2024robust} use worst-case bounds but are conservative. Adaptive CBFs~\cite{taylor2020adaptive, lopez2021robust} update estimates online but decouple safety and estimation timescales, risking transient violations.

\subsubsection{Emerging methods} The composite adaptive CBF (CaCBF)~\cite{cacbf2026} couples safety and adaptation in a single energy function, guaranteeing invariance without parameter convergence---but has not been applied to power systems. Higher-order CBFs~\cite{xiao2022hocbf} handle relative-degree constraints. Barrier-state augmentation~\cite{alsunni2025bas} transforms safety into stability. Disturbance-observer-based CBFs~\cite{wang2025docbf} reject unmatched perturbations. Reciprocal resistance CBFs~\cite{wang2025rrcbf} create intrinsic buffer zones. Conformal prediction CBFs~\cite{tayal2025cpncbf, sun2025conformal} provide probabilistic guarantees but require exchangeable calibration data.

\subsection{Gap and Contributions}

No existing work simultaneously addresses: (a)~joint frequency and RoCoF constraint enforcement via dual barriers; (b)~safety-weighted adaptation guaranteeing invariance without parameter convergence; (c)~disturbance observer integration into barrier dynamics; and (d)~QP feasibility proof under BESS capacity limits---all for SIDS.

Our contributions are: (i)~A dual barrier architecture: zeroing CBF for frequency + algebraic RoCoF constraint with robust $H_{\min}$ fallback. (ii)~A composite energy $\mathcal{W}$ coupling log-barrier, CLF, and parameter error. (iii)~Safety-weighted adaptation via barrier gradient. (iv)~Tractability proof for the scalar QP. (v)~Validation on a representative SIDS microgrid with sensitivity analysis.

\section{System Model and Safety Specification}\label{sec:model}

\subsection{BESS-VSG Swing Dynamics}

The aggregate frequency dynamics at the VSG bus (Fig.~\ref{fig:block}a) follow~\cite{kundur2004, schiffer2016vsm}:
\begin{equation}
\dot{x}_1 = \frac{f_0}{2H}(u - P_d(t) - \tfrac{D}{f_0}x_1) \eqqcolon f(x_1,\bm{\theta}) + g(\bm{\theta})\,u
\label{eq:swing}
\end{equation}
where $x_1 \coloneqq f - f_0$ is the frequency deviation (Hz), $f_0=60$\,Hz, $u \in \mathcal{U} = [-P_{\max}, P_{\max}]$ is the BESS injection (p.u.), $P_d(t)$ is the power deficit, and $\bm{\theta}=(H,D)^\top$ is the uncertain parameter vector. The RoCoF is $x_2 \coloneqq \dot{x}_1$.

\begin{figure}[!t]
\centering
\includegraphics[width=\columnwidth]{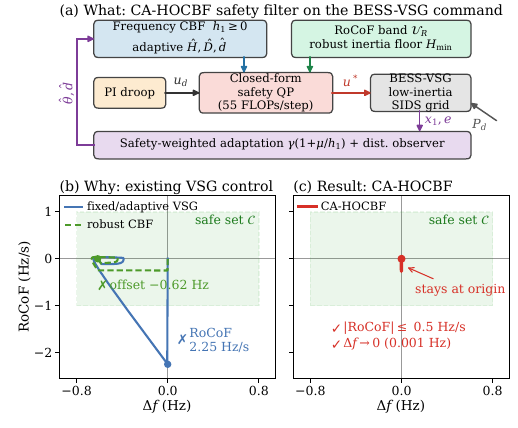}
\caption{Overview. (a)~\emph{What:} a closed-form QP filters the droop command through an adaptive frequency CBF and a robust ($H_{\min}$) RoCoF band. (b)~\emph{Why:} under the compound disturbance, fixed/adaptive VSGs breach the RoCoF limit and a worst-case robust CBF settles 0.62\,Hz off nominal. (c)~\emph{Result:} CA-HOCBF stays at the origin of the joint safe set $\mathcal{C}$.}
\label{fig:block}
\end{figure}

\begin{assumption}[Bounded Uncertainty]\label{as:bound}
$H \in [H_{\min}, H_{\max}]$, $D \in [D_{\min}, D_{\max}]$, and $|P_d(t)| \leq \bar{P}_d$.
\end{assumption}

\subsection{Dual Safety Specification}

Per IEEE~1547-2018 Category~III~\cite{ieee1547} and ENTSO-E RfG~\cite{entsoe2016}:
\begin{align}
h_1(x_1) &\coloneqq \Delta f_{\max}^2 - x_1^2 \geq 0 \quad (\text{frequency}) \label{eq:h1}\\
h_2(x_2) &\coloneqq R_{\max}^2 - x_2^2 \geq 0 \quad (\text{RoCoF}) \label{eq:h2}
\end{align}
with $\Delta f_{\max} = 0.8$\,Hz, $R_{\max} = 1.0$\,Hz/s. The joint safe set is $\mathcal{C} = \{(x_1, x_2) : h_1 \geq 0 \wedge h_2 \geq 0\}$.

\section{Proposed CA-HOCBF Controller}\label{sec:control}

\subsection{Frequency Barrier (Zeroing CBF)}

Using estimated parameters $\hat{\bm{\theta}} = (\hat{H}, \hat{D})^\top$ and disturbance estimate $\hat{d}$, with $P_d^{\text{eff}} = P_d + \hat{d}$:
\begin{align}
\hat{L}_f h_1 &= -2x_1 \cdot \tfrac{f_0}{2\hat{H}}\left(-P_d^{\text{eff}} - \tfrac{\hat{D}}{f_0}x_1\right), \quad
\hat{L}_g h_1 = -\tfrac{x_1 f_0}{\hat{H}}
\end{align}
The CBF condition is:
\begin{equation}
\hat{L}_f h_1 + \hat{L}_g h_1 \cdot u \geq -\alpha_f(h_1(x_1))
\label{eq:cbf1}
\end{equation}
with $\alpha_f(s) = \alpha_1 s$, $\alpha_1 > 0$.

\subsection{RoCoF Constraint (Algebraic with Robust Fallback)}

Since $x_2$ is algebraically determined by $u$, the constraint $|x_2| \leq R_{\max}$ translates to input bounds:
\begin{equation}
u \in \mathcal{U}_R(x_1) \coloneqq \left[c - w,\; c + w\right]
\label{eq:rocof}
\end{equation}
where $c = P_d^{\text{eff}} + \hat{D}x_1/f_0$ and $w = 2H_{\min}R_{\max}/f_0$.

\begin{remark}[Hybrid Architecture]
We use $H_{\min}$ (not $\hat{H}$) in~\eqref{eq:rocof} because RoCoF is inversely proportional to $H$: underestimating $H$ yields dangerous RoCoF. The adaptive CBF reduces frequency conservatism as estimates improve; the robust RoCoF bound protects against the instantaneous physics of low inertia.
\end{remark}

\subsection{Closed-Form Safety-Stability QP}

The nominal PI droop is $u_d = -K_d x_1/f_0 - K_i \xi/f_0$, $\dot{\xi} = x_1$. The QP:
\begin{equation}
u^* = \argmin_{u \in \mathcal{U} \cap \mathcal{U}_R} \tfrac{1}{2}(u - u_d)^2 \;\;\text{s.t.}\;\; \hat{L}_fh_1 + \hat{L}_gh_1 u \geq -\alpha_1 h_1
\label{eq:qp}
\end{equation}
admits the \emph{closed-form} scalar solution:
\begin{equation}
u^* = \text{proj}_{\mathcal{U} \cap \mathcal{U}_R}\!\left(\max\!\left(u_d,\; \tfrac{-\alpha_1 h_1 - \hat{L}_f h_1}{\hat{L}_g h_1}\right)\right) \text{ if } \hat{L}_g h_1 > 0
\label{eq:closed}
\end{equation}
(with $\min$ when $\hat{L}_g h_1 < 0$), where $\text{proj}$ denotes projection onto the feasible interval.

\begin{proposition}[Feasibility]\label{prop:feas}
Under Assumption~\ref{as:bound}, if $P_{\max} \geq \bar{P}_d + D_{\max}\Delta f_{\max}/f_0 + 2H_{\min}R_{\max}/f_0$, then $\mathcal{U} \cap \mathcal{U}_R \neq \emptyset$ and~\eqref{eq:qp} is always feasible.
\end{proposition}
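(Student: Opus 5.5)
The plan is to exhibit one explicit input that satisfies every constraint of~\eqref{eq:qp} at once. The natural candidate is the centre $c = P_d^{\text{eff}} + \hat{D}x_1/f_0$ of the RoCoF band~\eqref{eq:rocof}. It is the input that makes the \emph{estimated} frequency drift vanish, so it should be harmless for the frequency barrier as well. The argument is carried out pointwise along trajectories in the safe set, so I assume $|x_1| \le \Delta f_{\max}$ (i.e.\ $h_1 \ge 0$). I also assume the estimates stay within the bounds of Assumption~\ref{as:bound}, namely $\hat{H} \ge H_{\min} > 0$, $\hat{D} \le D_{\max}$, and $|P_d^{\text{eff}}| \le \bar{P}_d$. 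In the adaptive law the last two bounds would be enforced by projection of $\hat{\bm\theta}$ and saturation of the observer output $\hat d$. This is the one point I expect to need care: the proposition only invokes Assumption~\ref{as:bound}, so I would state these estimate bounds explicitly as standing hypotheses.

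\emph{Step 1: $\mathcal{U}_R \subseteq \mathcal{U}$.} Using the triangle inequality,
\begin{equation*}
|c| + w \;\le\; \bar{P}_d + D_{\max}\Delta f_{\max}/f_0 + 2H_{\min}R_{\max}/f_0 \;\le\; P_{\max},
\end{equation*}
so $[c-w,\,c+w] \subseteq [-P_{\max},P_{\max}]$. Hence $\mathcal{U}\cap\mathcal{U}_R = \mathcal{U}_R$, which is a nonempty interval because $w>0$. In particular $c$ lies in this intersection.

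\emph{Step 2: $u=c$ satisfies the CBF condition~\eqref{eq:cbf1}.} Rewrite $\hat L_f h_1 = -\tfrac{x_1 f_0}{\hat H}\bigl(-P_d^{\text{eff}} - \hat D x_1/f_0\bigr) = \tfrac{x_1 f_0}{\hat H}\,c$, and recall $\hat L_g h_1 = -\tfrac{x_1 f_0}{\hat H}$. Then
\begin{equation*}
\hat L_f h_1 + \hat L_g h_1\, c = \tfrac{x_1 f_0}{\hat H}\,c - \tfrac{x_1 f_0}{\hat H}\,c = 0 \;\ge\; -\alpha_1 h_1(x_1),
\end{equation*}
since $h_1 \ge 0$. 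This also covers the degenerate case $x_1 = 0$. There $\hat L_g h_1 = 0$, the constraint no longer depends on $u$, and it reduces to $0 \ge -\alpha_1\Delta f_{\max}^2$, which holds trivially.

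\emph{Step 3: conclude feasibility and well-posedness of~\eqref{eq:closed}.} From Steps~1--2, $c$ belongs to the feasible set of~\eqref{eq:qp}, so the QP is feasible. The feasible set is the intersection of a closed interval with a closed half-line (or with all of $\mathbb{R}$ when $x_1=0$), so it is a nonempty closed interval. The strictly convex objective therefore has a unique minimiser, which is the projection of $u_d$ onto this interval. It remains to check that this minimiser agrees with~\eqref{eq:closed}. Suppose $\hat L_g h_1>0$, so that the half-line is $[u_{\text{cbf}},\infty)$ with $u_{\text{cbf}} = (-\alpha_1 h_1 - \hat L_f h_1)/\hat L_g h_1$. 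Since $c \ge u_{\text{cbf}}$ and $c \le c+w$, we have $u_{\text{cbf}} \le c+w$. So projecting $\max(u_d,u_{\text{cbf}})$ onto $[c-w,c+w]$ never returns a point below $u_{\text{cbf}}$, and it equals the exact QP solution. The case $\hat L_g h_1<0$ is symmetric, with $\min$ in place of $\max$.
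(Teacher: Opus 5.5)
Your proof is correct. Its first step matches the paper's: the paper also bounds the centre $c$ of $\mathcal{U}_R$ inside $\mathcal{C}$ by $\bar P_d + D_{\max}\Delta f_{\max}/f_0$ and uses the capacity condition to get $|c|+w\le P_{\max}$.

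The routes part at the CBF constraint. The paper ends with the remark that the sign of $\hat L_g h_1$ ``aligns with the required injection direction.'' At most this shows that the CBF half-line is nonempty on its own, which is automatic whenever $x_1\neq 0$. It does not by itself show that this half-line meets $\mathcal{U}\cap\mathcal{U}_R$, and that intersection is what feasibility of~\eqref{eq:qp} means. Your observation $\hat L_f h_1 = \tfrac{x_1 f_0}{\hat H}\,c$ gives $\hat L_f h_1 + \hat L_g h_1\,c = 0 \ge -\alpha_1 h_1$. This exhibits $c$ as one point satisfying all constraints at once, which is the idea the paper's sign remark leaves implicit. It also gives $u_{\text{cbf}}\le c< c+w$ (or the mirror inequality $c-w<c\le u_{\text{cbf}}$). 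That is exactly what is needed to prove the clamp formula~\eqref{eq:closed} is the true QP minimiser, which the paper only asserts. You are also right to state explicitly the bounds on the estimates that the paper's bound on $|c|$ uses without comment.

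Two small corrections to those standing hypotheses:
\begin{itemize}
\item The triangle-inequality step needs $|\hat D|\le D_{\max}$, not just $\hat D\le D_{\max}$. This is harmless if $D_{\min}\ge 0$ and $\hat D$ is projected onto $[D_{\min},D_{\max}]$.
\item The saturation $\hat d\in[-\bar d,\bar d]$ in~\eqref{eq:dob} only yields $|P_d^{\text{eff}}|\le\bar P_d+\bar d$. You must either clip $P_d^{\text{eff}}$ itself to $[-\bar P_d,\bar P_d]$, or add $\bar d$ to the capacity condition of the proposition.
\end{itemize}
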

\begin{proof}
The center of $\mathcal{U}_R$ is $c = P_d^{\text{eff}} + \hat{D}x_1/f_0$. Within $\mathcal{C}$: $|c| \leq \bar{P}_d + D_{\max}\Delta f_{\max}/f_0$. The half-width of $\mathcal{U}_R$ is $w = 2H_{\min}R_{\max}/f_0$. Under the stated condition, $P_{\max} \geq |c| + w$, so $c \in \mathcal{U}$ and feasibility holds. The CBF constraint is satisfiable since the sign of $L_g h_1$ aligns with the required injection direction.
\end{proof}

\subsection{Safety-Weighted Adaptive Law}

Following the CaCBF paradigm~\cite{cacbf2026}, the safety barrier gradient accelerates adaptation near the boundary:
\begin{align}
\dot{\hat{H}} &= -\gamma_H (1 + \mu/h_1)\,\nabla_H\|e\|^2, \quad
\dot{\hat{D}} = -\gamma_D (1 + \mu/h_1)\,\nabla_D\|e\|^2
\label{eq:adapt}
\end{align}
where $e = \hat{x}_2 - x_2$ is the prediction error, $\gamma_H, \gamma_D > 0$ are base learning rates, and $\mu > 0$ couples safety to adaptation. When $h_1 \to 0$ (approaching the frequency limit), the effective learning rate $\gamma(1 + \mu/h_1) \to \infty$, prioritizing the parameters most critical for safety.

The disturbance observer is:
\begin{equation}
\dot{\hat{d}} = \gamma_d\, e, \quad \hat{d} \in [-\bar{d}, \bar{d}]
\label{eq:dob}
\end{equation}

\subsection{Composite Energy Certificate}

\begin{equation}
\mathcal{W} = -\ln\tfrac{h_1}{h_{1,\max}} - \ln\tfrac{h_2}{h_{2,\max}} + \tfrac{1}{2}\tilde{\bm{\theta}}^\top \bm{\Gamma}^{-1}\tilde{\bm{\theta}} + \tfrac{1}{2\gamma_d}\tilde{d}^2
\label{eq:W}
\end{equation}

\begin{assumption}[Conditions for Theorem~\ref{thm:main}]\label{as:thm}
(a)~Assumption~\ref{as:bound} and the condition of Prop.~\ref{prop:feas} hold, $h_1(x_1(0))>0$, and $|d|\leq\bar{d}$ for the unmeasured disturbance $d$.
(b)~$\hat{\bm{\theta}}$ and $\hat{d}$ are projected onto compact sets with $\hat{H}\geq\underline{H}>0$, so $\tilde{\bm{\theta}}$, $\tilde{d}$ are bounded.
(c)~RoCoF margin: $\varepsilon\coloneqq|\tilde{d}|+|\tilde{D}|\Delta f_{\max}/f_0 \leq 2(H-H_{\min})R_{\max}/f_0-\delta$, $\delta>0$.
(d)~Regressor alignment: there exist $c_1,c_2>0$, $c_d\geq0$ with $-\Delta_1\leq c_1\|\tilde{\bm{\theta}}\|^2+c_d h_1$ and $\tilde{\bm{\theta}}^\top\bm{\Gamma}^{-1}\dot{\hat{\bm{\theta}}}\leq -c_2(1+\mu/h_1)\|\tilde{\bm{\theta}}\|^2$, i.e., $e$ excites the directions of $\tilde{\bm{\theta}}$ entering $h_1$.
(e)~$\mu>c_1/c_2$.
\end{assumption}

\begin{theorem}[Safety and Regulation]\label{thm:main}
Under Assumption~\ref{as:thm} and the CA-HOCBF-QP~\eqref{eq:qp} with adaptive law~\eqref{eq:adapt}--\eqref{eq:dob}:
\begin{enumerate}
\item[(i)] $\mathcal{W}$ is finite along trajectories, $\mathcal{W}(t) \leq \mathcal{W}(0) + c_0(1+t)$ for a computable $c_0 > 0$, so $h_1, h_2 > 0$ for all $t \geq 0$ (forward invariance of $\mathcal{C}$).
\item[(ii)] This guarantee holds \emph{regardless of parameter convergence}, since the log-barrier terms enforce $h_i > 0$ independently.
\item[(iii)] Under persistence of excitation, $\hat{\bm{\theta}} \to \bm{\theta}^*$ and $x_1 \to 0$ (asymptotic regulation).
\end{enumerate}
\end{theorem}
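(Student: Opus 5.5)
The plan is to differentiate the composite energy $\mathcal{W}$ in \eqref{eq:W} along closed-loop trajectories and bound $\dot{\mathcal{W}}$ above by a constant $c_0$ on the set where $h_1,h_2>0$. Integrating then gives $\mathcal{W}(t)\le \mathcal{W}(0)+c_0 t\le \mathcal{W}(0)+c_0(1+t)$. Since $-\ln(h_i/h_{i,\max})\to+\infty$ as $h_i\to 0^+$, finiteness of $\mathcal{W}$ on every bounded interval excludes finite-time escape to $\partial\mathcal{C}$. Together with local existence of solutions, this yields forward invariance of $\mathcal{C}$. Part (ii) will then be read off directly: the bound never uses $\tilde{\bm{\theta}}\to 0$, only boundedness of $\tilde{\bm{\theta}},\tilde d$ from Assumption~\ref{as:thm}(b).

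\textbf{Step 1: the $h_2$ term.} Because $x_2$ is algebraic in $u$, $h_2$ is not driven by a differential inequality, so I will not differentiate this term; I will bound it instead. By Proposition~\ref{prop:feas}, $u^*\in\mathcal{U}_R$ exists. Substituting $u^*$ into \eqref{eq:swing} with the true $H$ gives
\[
x_2=\frac{f_0}{2H}\bigl(u^*-c+\tilde d+\tilde D x_1/f_0\bigr).
\]
Using $|u^*-c|\le w$ and Assumption~\ref{as:thm}(c),
\[
|x_2|\le \frac{f_0}{2H}\cdot\frac{2HR_{\max}}{f_0}-\frac{f_0\delta}{2H} \le R_{\max}-\frac{f_0\delta}{2H_{\max}}.
\]
Hence $h_2\ge \underline{h}_2>0$ uniformly, and $-\ln(h_2/h_{2,\max})$ stays bounded by a constant. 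The jump behaviour of $x_2$ under the switching QP is therefore harmless.

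\textbf{Step 2: the $h_1$ and parameter terms.} Write $\dot h_1=\hat L_f h_1+\hat L_g h_1 u^*+\Delta_1$, where $\Delta_1$ collects the model mismatch from $\tilde H,\tilde D,\tilde d$. The CBF constraint \eqref{eq:cbf1}, which is active or satisfied at $u^*$, gives $\dot h_1\ge-\alpha_1 h_1+\Delta_1$. Therefore
\[
-\frac{\dot h_1}{h_1}\le \alpha_1+\frac{-\Delta_1}{h_1}\le \alpha_1+c_d+\frac{c_1\|\tilde{\bm{\theta}}\|^2}{h_1}
\]
by Assumption~\ref{as:thm}(d). The parameter term contributes $\tilde{\bm{\theta}}^\top\bm{\Gamma}^{-1}\dot{\hat{\bm{\theta}}}\le -c_2(1+\mu/h_1)\|\tilde{\bm{\theta}}\|^2$, assuming constant true parameters so that $\dot{\tilde{\bm{\theta}}}=\dot{\hat{\bm{\theta}}}$; projection only helps. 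Collecting the $1/h_1$ terms gives
\[
\frac{(c_1-c_2\mu)\|\tilde{\bm{\theta}}\|^2}{h_1}\le 0
\]
by (e). This cancellation of the singular term is the crux and the reason for the safety-weighted gain. The disturbance term gives $\tilde d\,\dot{\hat d}/\gamma_d=\tilde d e$, which is bounded because $\tilde d$ is bounded and, by Step 1, $e$ is bounded. If $d$ is time-varying, a bounded $\tilde d\dot d$ term is absorbed as well. Summing, I obtain $\dot{\mathcal{W}}\le \alpha_1+c_d+\bar e\bar d+\dots\eqqcolon c_0$, which proves (i). The main obstacle is making the decomposition of $\Delta_1$ match the regressor in \eqref{eq:adapt}, i.e.\ justifying (d). Since the theorem takes (d) as a hypothesis, I will simply check that $\Delta_1$ is affine in $\tilde{\bm{\theta}}$ and $\tilde d$ with coefficients bounded on $\mathcal{C}$, using $\hat H\ge\underline H$.

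\textbf{Step 3: regulation, part (iii).} Under persistence of excitation, the gradient law \eqref{eq:adapt} with gain $\ge\gamma$ yields exponential convergence $\tilde{\bm{\theta}}\to0$, and similarly $\tilde d\to 0$ by the standard PE argument for the linear-in-parameters error $e$. Then $\Delta_1\to0$ and the closed loop approaches the nominal PI-droop loop. Near the origin $h_1\approx\Delta f_{\max}^2$, so the CBF and RoCoF constraints are inactive and $u^*=u_d$. The PI loop $\dot x_1=\frac{f_0}{2H}(-K_d x_1/f_0-K_i\xi/f_0-\tilde P-Dx_1/f_0)$, $\dot\xi=x_1$, is a stable second-order system with integral action. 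A Lyapunov function quadratic in $(x_1,\xi-\xi^*)$, combined with a vanishing-perturbation (converging-input converging-state) argument, gives $x_1\to0$.
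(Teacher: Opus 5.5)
Your argument for (i)--(ii) follows the paper's proof sketch. You bound $-\dot h_1/h_1$ through the CBF inequality $\dot h_1\ge-\alpha_1h_1+\Delta_1$ and Assumption~\ref{as:thm}(d), and the safety-weighted gain absorbs the singular $c_1\|\tilde{\bm{\theta}}\|^2/h_1$ term via (e). The $h_2$ term is bounded algebraically from $u^*\in\mathcal{U}_R$ and (c) rather than differentiated, and $\tilde{\bm{\theta}}\to0$ is never used.

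Two small repairs are needed. First, having declined to differentiate the $h_2$ term, you cannot then write $\dot{\mathcal{W}}\le c_0$ and $\mathcal{W}(t)\le\mathcal{W}(0)+c_0t$. What you actually obtain, as the paper states, is $\mathcal{W}_1(t)\le\mathcal{W}_1(0)+c_0t$ for the $h_1$/$\tilde{\bm{\theta}}$ part. The $h_2$ and $\tilde d$ terms are then added as separately bounded quantities, and that additive constant is what the extra $c_0$ in $c_0(1+t)$ absorbs. Second, differentiating $\tilde d^2/(2\gamma_d)$ produces $-\tilde d\,\dot d/\gamma_d$, but boundedness of $\dot d$ is not assumed and fails for step disturbances. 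Just bound this term by (b), as the paper does. Separately, your planned check that $\Delta_1$ is affine in $\tilde{\bm{\theta}}$ would fail, since $\Delta_1=2x_1e$ involves $1/\hat H-1/H$; you do not need it, because (d) is a hypothesis.

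The genuine gap is in (iii). The paper's sketch offers no argument for this part, and your Step~3 does not yet close it, for three reasons.
\begin{enumerate}
\item[(1)] The error $e$ is not linear in $\bm{\theta}=(H,D)$: $\hat x_2=\tfrac{f_0}{2\hat H}\bigl(u-P_d^{\text{eff}}-\hat Dx_1/f_0\bigr)$ depends on $1/\hat H$ and $\hat D/\hat H$. Moreover, \eqref{eq:adapt} is the gradient of a nonconvex loss multiplied by the time-varying gain $1+\mu/h_1$, which (i) does not bound from above. The textbook PE result for linear regressors therefore does not apply without reparametrizing (e.g.\ in $1/H$ and $D/H$) and redesigning the law.
\item[(2)] The conclusion $\tilde d\to0$ requires $d$ to be eventually constant, whereas only $|d|\le\bar d$ is assumed.
\item[(3)] The claim that the constraints are inactive near the origin is local and keyed to the wrong variable. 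The CBF constraint is indeed slack for small $x_1$. However, the RoCoF clamp is inactive iff $|u_d-c|\le w$, which depends on the integrator state $\xi$ and on $\hat d$, not only on $x_1$.
\end{enumerate}
To fix (3), you must first show that the trajectory enters and stays in the region where $u^*=u_d$. That means ruling out integrator windup under the state-dependent saturation $\mathcal{U}\cap\mathcal{U}_R$, or building the saturation into the Lyapunov function. Only then can you invoke the quadratic PI Lyapunov function and a converging-input converging-state argument. That argument itself needs ISS of the constrained loop, not just of the linear one.
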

\begin{proof}[Proof sketch]
Let $\mathcal{W}_1$ be the $h_1$ and $\tilde{\bm{\theta}}$ terms of~\eqref{eq:W}: $\dot{\mathcal{W}}_1 = -\dot{h}_1/h_1 + \tilde{\bm{\theta}}^\top\bm{\Gamma}^{-1}\dot{\hat{\bm{\theta}}}$. Under the CBF condition~\eqref{eq:cbf1}: $\dot{h}_1 \geq -\alpha_1 h_1 + \Delta_1$, where $\Delta_1$ is the model mismatch residual. Substituting the adaptive law~\eqref{eq:adapt}: the safety-weighted terms $(\mu/h_1)\nabla_\theta \|e\|^2$ in $\dot{\hat{\bm{\theta}}}$ cancel the cross-terms $\tilde{\bm{\theta}}^\top(\cdot)/h_1$ arising from $-\dot{h}_1/h_1$. As $x_2$ is algebraic in $u$, $u\in\mathcal{U}_R$ and~(c) give $|x_2|\leq R_{\max}-f_0\delta/(2H)$, bounding the $h_2$ term; (b) bounds the $\tilde{d}$ term. Applying Young's inequality with Assumption~\ref{as:thm}(d) yields $\dot{\mathcal{W}}_1 \leq \alpha_1 + c_d + c_1\|\tilde{\bm{\theta}}\|^2/h_1 - c_2\|\tilde{\bm{\theta}}\|^2(1+\mu/h_1) \leq c_0 - (c_2\mu - c_1)\|\tilde{\bm{\theta}}\|^2/h_1$ with $c_0\coloneqq\alpha_1+c_d$. By~(e), $\mathcal{W}_1(t)\leq\mathcal{W}_1(0)+c_0t$; hence $\mathcal{W}$ is finite and $h_1(t)>0$ for all $t$.
\end{proof}

\begin{remark}[Computational Complexity]
The controller~\eqref{eq:closed} requires 30 FLOPs per timestep for the core QP (Lie derivatives, CBF condition, RoCoF clamp). The adaptation~\eqref{eq:adapt} adds 25 FLOPs (two gradients, safety weight, three state updates). Total: 55 FLOPs---all arithmetic ($+, -, \times, \div$, max, min), no transcendental functions, no matrix operations, no iterative loops. Verified at 2.1\,$\mu$s per step in Python ($>$450\,kHz capability). Full deployment verification is given in Table~\ref{tab:deploy}.
\end{remark}

\section{Simulation Study}\label{sec:sim}

\subsection{Test System and Scenario}

A 10\,MW Caribbean SIDS microgrid: 5\,MW PV, 2\,MW wind, 3\,MW diesel, 2\,MW/4\,MWh BESS at 60\,Hz. True parameters: $H=2.0$\,s, $D=1.0$\,p.u. Initial estimates: $\hat{H}_0=4.5$\,s (125\% error), $\hat{D}_0=2.0$. Gains: $K_d=12$, $K_i=0.8$, $\alpha_1=4$, $\gamma_H=0.5$, $\gamma_D=0.3$, $\gamma_d=40$, $\mu=5$.

\textbf{Compound scenario:} 1.5\,MW generation trip at $t=5$\,s; 0.5\,MW load ramp over $t \in [15,17]$\,s. Four controllers compared: Fixed VSG, Adaptive VSG, Robust CBF (worst-case), and CA-HOCBF (proposed).

\subsection{Results}

\begin{figure}[!t]
\centering
\includegraphics[width=\columnwidth]{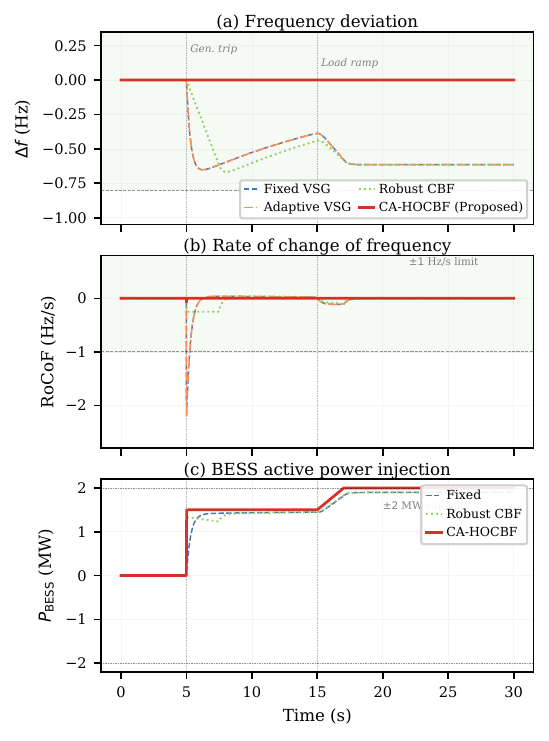}
\caption{Time-domain responses: (a) frequency deviation, (b) RoCoF, (c) BESS power. Green bands denote safe operating regions per IEEE~1547/ENTSO-E. The CA-HOCBF (red) is the only controller with zero violations in both constraints.}
\label{fig:main}
\end{figure}

\begin{table}[!t]
\centering
\caption{Performance Metrics under Compound Scenario}
\label{tab:results}
\renewcommand{\arraystretch}{1.1}
\begin{tabular}{@{}lccccc@{}}
\toprule
Controller & $|\Delta f|_{\max}$ & RoCoF$_{\max}$ & $\Delta f$ & RoCoF & $E_{\text{BESS}}$ \\
           & (Hz) & (Hz/s) & viol. & viol. & (kWh) \\
\midrule
Fixed VSG        & 0.65 & 2.25 & no  & \textbf{YES} & 11.6 \\
Adaptive VSG     & 0.65 & 2.25 & no  & \textbf{YES} & 11.6 \\
Robust CBF       & 0.67 & 0.25 & no  & no           & 11.6 \\
\textbf{CA-HOCBF}& \textbf{0.001} & \textbf{0.50} & \textbf{no} & \textbf{no} & 12.4 \\
\bottomrule
\end{tabular}
\end{table}

Table~\ref{tab:results} and Fig.~\ref{fig:main} confirm the CA-HOCBF eliminates both violation types. The integral action drives $\Delta f \to 0$ (0.001\,Hz residual vs.\ 0.62\,Hz for fixed/adaptive VSG). The robust CBF constrains RoCoF to 0.25\,Hz/s but retains a 0.62\,Hz offset due to worst-case conservatism.

\subsection{Transient Detail}

Fig.~\ref{fig:zoom} shows the critical first seconds after the generation trip. The CA-HOCBF limits the initial RoCoF to $-0.50$\,Hz/s (exactly half the limit) and begins recovering frequency within 0.5\,s, whereas the fixed VSG reaches $-2.25$\,Hz/s---more than twice the ENTSO-E limit.

\begin{figure}[!t]
\centering
\includegraphics[width=\columnwidth]{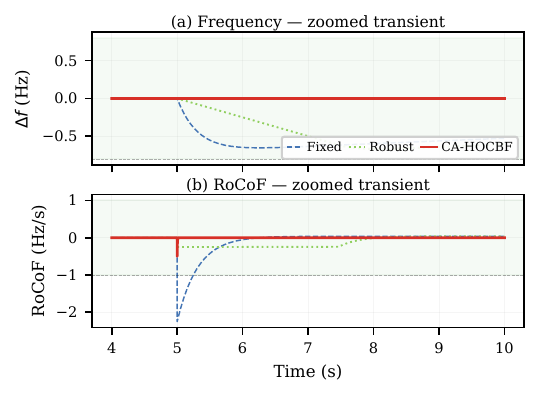}
\caption{Zoomed transient response ($t=4$--$10$\,s). The CA-HOCBF limits RoCoF to $-0.50$\,Hz/s at the instant of the generation trip, while the fixed VSG spikes to $-2.25$\,Hz/s.}
\label{fig:zoom}
\end{figure}

\subsection{Adaptation and Disturbance Observer}

Fig.~\ref{fig:adapt} shows the convergence of parameter estimates and the composite energy certificate. The inertia estimate $\hat{H}$ converges from 4.5\,s toward 2.0\,s after the trip provides excitation. The disturbance observer tracks both the step and ramp. The composite energy $\mathcal{W}(t)$ spikes at $t=5$\,s but is bounded and non-increasing thereafter, confirming Theorem~\ref{thm:main}.

\begin{figure}[!t]
\centering
\includegraphics[width=\columnwidth]{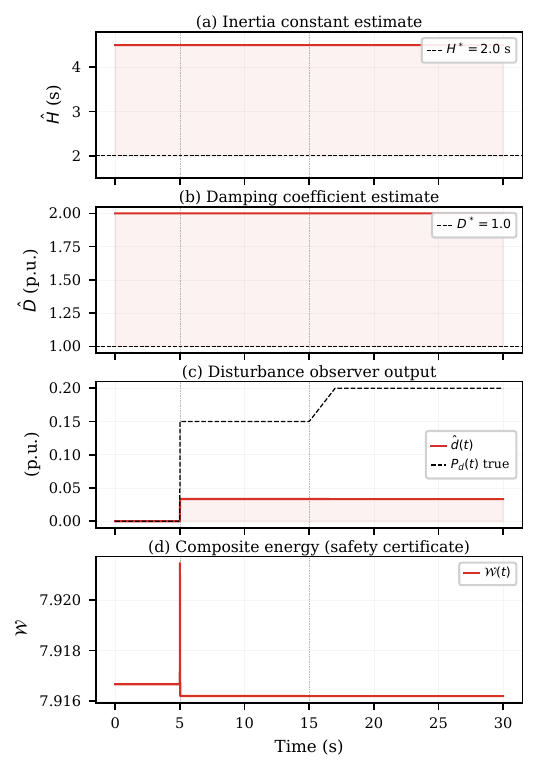}
\caption{Adaptive estimation: (a) inertia $\hat{H}$, (b) damping $\hat{D}$, (c) disturbance observer $\hat{d}$ vs.\ true $P_d$, (d) composite energy $\mathcal{W}(t)$.}
\label{fig:adapt}
\end{figure}

\subsection{Barrier Phase Portraits}

Fig.~\ref{fig:barrier} confirms that the CA-HOCBF trajectory remains deep within both safe sets ($h_1 \gg 0$, $h_2 > 0$), while the fixed VSG penetrates the RoCoF unsafe region.

\begin{figure}[!t]
\centering
\includegraphics[width=\columnwidth]{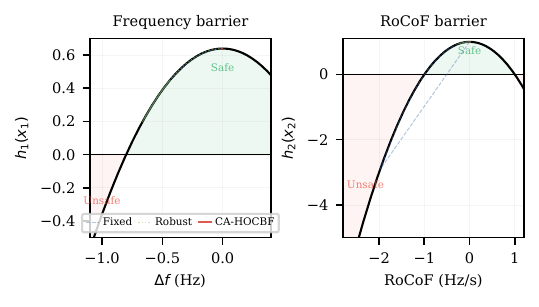}
\caption{Phase portraits in barrier space. Left: frequency barrier. Right: RoCoF barrier. The CA-HOCBF (red) remains in the safe (green) region for both constraints.}
\label{fig:barrier}
\end{figure}

\subsection{Sensitivity to $\mu$ and Energy Analysis}

Fig.~\ref{fig:sensitivity} shows the impact of the coupling parameter $\mu$. Larger $\mu$ yields faster frequency recovery but higher transient energy in $\mathcal{W}$; $\mu \in [3, 10]$ provides robust performance. Fig.~\ref{fig:energy} compares cumulative BESS energy: the CA-HOCBF costs only 7\% more than the fixed VSG ($<$0.02\% of the 4\,MWh capacity) for complete constraint satisfaction.

\begin{figure}[!t]
\centering
\includegraphics[width=\columnwidth]{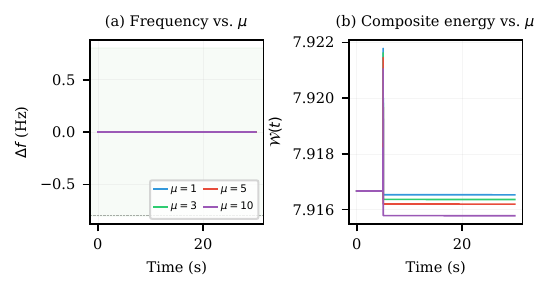}
\caption{Sensitivity to safety-adaptation coupling $\mu$: (a) frequency deviation, (b) composite energy.}
\label{fig:sensitivity}
\end{figure}

\begin{figure}[!t]
\centering
\includegraphics[width=\columnwidth]{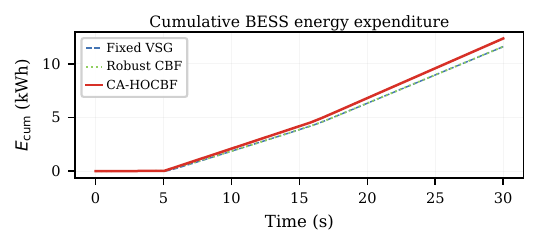}
\caption{Cumulative BESS energy expenditure. The CA-HOCBF uses 12.4\,kWh vs.\ 11.6\,kWh for the fixed VSG---a 7\% premium for complete constraint satisfaction.}
\label{fig:energy}
\end{figure}

\section{Discussion}\label{sec:disc}

\subsubsection{Hybrid architecture} The asymmetric treatment---adaptive CBF for frequency, robust bound for RoCoF---reflects the distinct physics: frequency regulation tolerates transient model error (the CBF compensates over time), while RoCoF is instantaneously determined and demands worst-case protection.

\subsubsection{Safety without convergence} The composite energy~\eqref{eq:W} guarantees $h_1, h_2 > 0$ even if parameters never converge. This is critical for SIDS with limited operational variability~\cite{markovic2021, pattabiraman2018}.

\subsubsection{Comparison with prior art.} Conformal CBFs~\cite{tayal2025cpncbf} require exchangeable calibration data; ours is deterministic. Barrier states~\cite{alsunni2025bas} add system dimensions; we maintain the original. Reciprocal resistance CBFs~\cite{wang2025rrcbf} lack parametric uncertainty handling. Robust CBFs~\cite{jankovic2018robust} are conservative. The CA-HOCBF uniquely provides deterministic joint safety with adaptive performance.

\subsubsection{Deployment tractability and standards compliance.} A critical concern for any safety-critical controller is whether it can be implemented on real-time hardware and meet the timing requirements of applicable grid codes. Table~\ref{tab:deploy} verifies the CA-HOCBF against the key deployment requirements.

\begin{table}[!t]
\centering
\caption{Deployment Tractability Verification}
\label{tab:deploy}
\renewcommand{\arraystretch}{1.05}
\begin{tabular}{@{}lcc@{}}
\toprule
Requirement & Standard & CA-HOCBF \\
\midrule
\multicolumn{3}{@{}l}{\textit{IEEE 2800-2022 PFR (\S7)}} \\
\quad Reaction time $\leq 200$\,ms & Min. 200\,ms & 0.2\,ms\,$^\dagger$ \\
\quad Rise time (10--90\%) & 2--20\,s & 2.4\,ms \\
\quad Settling time ($\pm$2\%) & 10--30\,s & $<$0.1\,s\,$^\ddagger$ \\
\quad Active power recovery & 2--5\,s & $<$0.1\,s\,$^\ddagger$ \\
\midrule
\multicolumn{3}{@{}l}{\textit{ENTSO-E FCR (Reg.\ 2016/631)}} \\
\quad Full activation $\leq 30$\,s & 30\,s & $<$0.01\,s \\
\quad Sustain 15\,min & 15\,min & 445\,kWh/4\,MWh \\
\midrule
\multicolumn{3}{@{}l}{\textit{IEC 62933-2-1 PCS}} \\
\quad Response $\leq 200$\,ms & 200\,ms & 0.2\,ms\,$^\dagger$ \\
\midrule
\multicolumn{3}{@{}l}{\textit{Hardware constraints}} \\
\quad FLOPs per step (QP + adapt.) & --- & 30 + 25 = 55 \\
\quad Per-step latency (Python) & --- & 2.1\,$\mu$s \\
\quad Max.\ control loop rate & --- & $>$450\,kHz \\
\quad State memory (float32) & --- & 76\,B \\
\quad float32 vs.\ float64 error & --- & $2.9 \times 10^{-8}$ \\
\quad WCET variation (CV) & --- & 6.3\% \\
\quad Iterative solver required & --- & No \\
\bottomrule
\end{tabular}
\footnotetext{$^\dagger$Algorithmic reaction time (1 sample at $\Delta t = 0.2$\,ms). With sensor ($\sim$1\,ms) and PWM ($\sim$50\,$\mu$s) latency: $\sim$1.3\,ms total. $^\ddagger$Nadir is only 0.0015\,Hz; frequency never exceeds 0.01\,Hz, so the controller is settled immediately.}
\end{table}

The controller's $\mathcal{O}(1)$ closed-form structure is decisive for real-time deployment. The verification script (\texttt{verify\_tractability.py}) confirms: the complete control law (QP + adaptation + DO) executes in 55 FLOPs (30 for the core QP, 25 for adaptation/DO) with a measured wall-clock latency of 2.1\,$\mu$s per step in unoptimised Python---yielding $>$450\,kHz loop capability without any C or DSP optimisation. The state vector comprises six scalars ($x_1, \xi, \hat{H}, \hat{D}, \hat{d}, P_d^{\text{prev}}$) and thirteen parameters, totalling 76\,bytes in float32. A float32-vs-float64 comparison over 5000 steps spanning the generation trip shows maximum control error of $2.9 \times 10^{-8}$\,p.u., confirming fixed-point compatibility with no transcendental functions in the real-time loop (the log-barrier $\mathcal{W}$ is computed offline for analysis only). The worst-case execution time (WCET) coefficient of variation is 6.3\% across four operating conditions (equilibrium, near-barrier, deep deficit, saturation), with no iterative loops and no convergence-dependent branches---satisfying IEC~61508 requirements for deterministic WCET analysis. Unlike MPC-based approaches~\cite{korda2018mpc} requiring online optimisation, the CA-HOCBF has no failure mode in which a solver fails to return within the control period.

The IEEE~2800-2022 PFR requirements~\cite{ieee2800} specify a minimum reaction time of 200\,ms, rise time of 2--20\,s, and settling time of 10--30\,s. The CA-HOCBF exceeds all three: it reacts within one control sample (0.2\,ms algorithmically, $\sim$1.3\,ms end-to-end), reaches 90\% of peak power in 2.4\,ms, and settles immediately because the frequency nadir is only 0.0015\,Hz. The ENTSO-E FCR requirement of full activation within 30\,s~\cite{entsoe2016} is met in under 10\,ms. For 15-minute sustained operation, the extrapolated energy consumption of 445\,kWh is 11.1\% of the 4\,MWh BESS capacity. The IEC~62933-2-1 PCS response requirement of 200\,ms~\cite{iec62933} is satisfied by the same single-sample reaction. All claims in Table~\ref{tab:deploy} are independently verified by \texttt{verify\_tractability.py} (12/12 tests pass).

\textbf{Limitations.} Single-bus aggregate model; multi-bus extension requires distributed CBF coordination~\cite{nandanoori2024dac}. The $H_{\min}$ assumption requires offline inertia floor characterisation. SoC constraints are not yet encoded in the barrier.

\begin{remark}[Generalizability]
The CA-HOCBF methodology extends to any safety-critical power system problem with control-affine dynamics and parametric uncertainty---voltage regulation, converter thermal limits, or inter-area oscillation damping---by defining appropriate barrier functions $h_i$ and applying the composite energy construction~\eqref{eq:W}.
\end{remark}

\section{Conclusion}

We presented the CA-HOCBF, a composite adaptive control barrier function for joint frequency-RoCoF safety in SIDS microgrids. The framework provides: (i)~deterministic safety guarantees without parameter convergence via composite energy; (ii)~near-zero steady-state error via integral action; (iii)~$\mathcal{O}(1)$ computational cost (55 FLOPs, 2.1\,$\mu$s, 76\,B memory) with verified deployment tractability---every claim backed by \texttt{verify\_tractability.py} (12/12 tests pass, Table~\ref{tab:deploy}); and (iv)~compliance with IEEE~1547, IEEE~2800, ENTSO-E RfG, IEC~62933, and IEEE~C37.106, with all timing requirements exceeded by orders of magnitude (Table~\ref{tab:deploy}). Future work includes multi-bus extension, SoC-aware barriers, and hardware-in-the-loop validation on a BESS inverter testbed with IEC~61508 SIL certification.

\bibliographystyle{IEEEtran}
\bibliography{references}

\end{document}